% ****** Start of file apssamp.tex ******
%
%   This file is part of the APS files in the REVTeX 4.2 distribution.
%   Version 4.2a of REVTeX, December 2014
%
%   Copyright (c) 2014 The American Physical Society.
%
%   See the REVTeX 4 README file for restrictions and more information.
%
% TeX'ing this file requires that you have AMS-LaTeX 2.0 installed
% as well as the rest of the prerequisites for REVTeX 4.2
%
% See the REVTeX 4 README file
% It also requires running BibTeX. The commands are as follows:
%
%  1)  latex apssamp.tex
%  2)  bibtex apssamp
%  3)  latex apssamp.tex
%  4)  latex apssamp.tex
%
\documentclass[%
 reprint,
%superscriptaddress,
%groupedaddress,
%unsortedaddress,
%runinaddress,
%frontmatterverbose, 
%preprint,
%preprintnumbers,
%nofootinbib,
%nobibnotes,
%bibnotes,
 amsmath,amssymb,
 aps,
 pra,
%prb,
%rmp,
%prstab,
%prstper,
%floatfix,
longbibliography
]{revtex4-1}
\usepackage{graphicx}% Include figure files
\usepackage{dcolumn}% Align table columns on decimal point
\usepackage{bm}% bold math
\usepackage{braket}
\usepackage{amsmath}
\usepackage{bbold}
\usepackage{xcolor}
\usepackage{amsthm}
\usepackage{physics}
\newtheorem{theorem}{Theorem}
\newtheorem{proposition}{Proposition}
\newtheorem{corollary}{Corollary}[theorem]

\theoremstyle{definition}

\theoremstyle{remark}

%\usepackage{hyperref}% add hypertext capabilities
%\usepackage[mathlines]{lineno}% Enable numbering of text and display math
%\linenumbers\relax % Commence numbering lines

%\usepackage[showframe,%Uncomment any one of the following lines to test 
%%scale=0.7, marginratio={1:1, 2:3}, ignoreall,% default settings
%%text={7in,10in},itemizeing,
%%margin=1.5in,
%%total={6.5in,8.75in}, top=1.2in, left=0.9in, includefoot,
%%height=10in,a5paper,hmargin={3cm,0.8in},
%]{geometry}

\begin{document}

\preprint{APS/123-QED}

\title{The tight Second Law inequality for coherent quantum systems \\
and finite-size heat baths}

\author{Marcin {\L}obejko}
\affiliation{Intitute of Theoretical Physics and Astrophysics, University of Gdansk, Gda\'nsk, Poland}
%\author{Micha{\l} Horodecki}%
%\affiliation{Intitute of Theoretical Physics and Astrophysics, University of Gda\'nsk, Gda\'nsk, Poland}

%\collaboration{CLEO Collaboration}%\noaffiliation

\date{\today}% It is always \today, today,
             %  but any date may be explicitly specified

\begin{abstract}
We propose a new form of the Second Law inequality that defines a tight bound for extractable work from the non-equilibrium quantum state. In classical thermodynamics, the optimal work is given by the difference of free energy, what according to the result of Skrzypczyk \emph{et al.} can be generalized for individual quantum systems. The saturation of this bound, however, requires an infinite bath and an ideal energy storage that is able to extract work from coherences. The new inequality, defined in terms of the ergotropy (rather than free energy), incorporates both of those important microscopic effects. In particular, we derive a formula for the locked energy in coherences, i.e. a quantum contribution that cannot be extracted as a work, and we find out its thermodynamic limit. Furthermore, we establish a general relation between ergotropy and free energy of the arbitrary quantum system coupled to the heat bath, what reveals that the latter is indeed the ultimate thermodynamic bound regarding work extraction, and shows that ergotropy can be interpreted as the generalization of the free energy for the finite-size heat baths.
\end{abstract}

%\keywords{Suggested keywords}%Use showkeys class option if keyword
                              %display desired
\maketitle
The quantum thermodynamics is an emerging theory with the main goal to generalize the Laws of Thermodynamics, valid in a macroscopic domain, to the energetic description of individual quantum systems. 
%Alongside many other approaches, recently, quantum thermodynamics has been formulated as a resource theory of non-equilibrium quantum states \cite{Rio2011, Horodecki2013, Aberg2013, Brandao2015, Skrzypczyk2014, Guryanova2016}. 
Alongside many other approaches, recently, quantum thermodynamics has been formulated as a general unitary dynamics 
%of the closed thermal machine 
and a resource theory of non-equilibrium quantum states \cite{Rio2011, Horodecki2013, Aberg2013, Brandao2013, Skrzypczyk2014, Brandao2015, Lostaglio2015, Lostaglio2015a, Guryanova2016}.
The basic question to answer in this framework is: how much work can be extracted providing a particular resource? %Here, however, we are facing the generally unsolved problem what is really the work in the quantum domain?
%However, in order to answer this, firstly one should define what is really the work in a quantum domain. 
%However, in the first place one should define what is really the work in a quantum domain. 

In order to answer this question, in the first place, one should define what is really the work in a quantum domain, since different definitions vary from one to another framework and regimes of interests \cite{Alicki1979, Yukava2000, Allahverdyan2004work, Talkner2007, Talkner2016, Jarzynski2015, Binder2015, Horodecki2013,Brandao2015,Skrzypczyk2014, Aberg2013, Frenzel2014, Gallego2016, Hayashi2017, Sampaio2018}. The lack of consensus in this field is mainly due to the presence of coherences in quantum states \cite{Perarnau2017, Korzekwa2016, Lostaglio2015, Cwiklinski2015, Aberg2014, Narasimhachar2015, Uzdin2015} and the appearance of work fluctuations \cite{Bochkov1977,Jarzynski1997, Crooks1999, Campisi2011}. For autonomous thermal machines, where the work reservoir is explicit, one of the most promising concept is the translationally invariant energy storage (with dynamics equivalent to the physical weight), where a change of its average energy corresponds to the work \cite{Skrzypczyk2014}.
%Nevertheless, recently it has been proposed a very promising idea of the translationally invariant energy storage (with dynamics equivalent to the physical weight) where the change of its average energy corresponds to the work \cite{Skrzypczyk2014}. 

It was shown that the work reservoir given by the weight is consistent with fluctuations theorems \cite{Alhambra2016, Aberg2018, Richens2016}, likewise, it can be used to derive the Third Law of Thermodynamics \cite{Masanes2017} or to an analysis of the optimal performance of heat engines \cite{Brunner2012, Lobejko2020}. In particular, according to the first paper introducing the weight idea \cite{Skrzypczyk2014}, Skrzypczyk \emph{at al.} proved that the optimal extracted work $W$ from a quantum state $\hat \rho_S$, in a contact with a thermal reservoir at temperature $T = \beta^{-1}$, is bounded by the difference of its non-equilibrium free energy:
\begin{equation} \label{second_law}
    W \le F(\hat \rho_S) - F(\hat \tau_S),
\end{equation}
where $F(\hat \rho) = E(\hat \rho) - T S(\hat \rho)$ and $E(\hat \rho)$ is an average energy, $S(\hat \rho)$ is the von Neumann entropy and $\hat \tau_S = \mathcal{Z}_S^{-1} e^{-\beta \hat H_S}$ is a Gibbs state according to the free Hamiltonian $\hat H_S$ with partition function $\mathcal{Z}_S$.

Inequality \eqref{second_law} encapsulates the quantum form of the Second Law of Thermodynamics, what especially restricts all possible micro engines to operate below the universal Carnot efficiency. However, as presented by the authors, the optimal work $W_{max} = F(\hat \rho_S) - F(\hat \tau_S)$ is only attainable under two strong conditions: (i) the requirement of an infinite heat bath and (ii) the average energy conservation. %The first assumption is required in order to split the whole protocol into an infinite number of steps such that the dissipation in each of them can be made arbitrary small and then the optimal work can be extracted in a quasi-reversible process. 
The first assumption is required to split the protocol into an infinite number of steps, such that the optimal work can be extracted in a quasi-reversible process.
However, it is seen that saturation of the Second Law is never possible for the physical (i.e. finite) heat baths. On the other hand, the second assumption is imposed in order to make possible the full work extraction from coherences, but in this approach the First Law is not independent of the initial state. In contrary, imposing the strict form of the energy-conservation once again (in general) makes the Second Law not tight \cite{Korzekwa2016}. In other words, as long as the inequality \eqref{second_law} provides the universal upper bound for the work extraction it does not answer the question when it can be saturated, what requires additional information about microscopic details of the heat bath and state of the work reservoir. %Technically speaking, the right hand side of the inequality \eqref{second_law} is independent of the heat bath free Hamiltonian $\hat H_B$ and the initial state of the weight $\hat \rho_W$. 

\subsection*{Tight Second Law}
In this work, we derive a generalized formula which incorporates both of those microscopic effects and makes the inequality tight. We
reveal that for arbitrary protocol $\hat \rho_S \otimes \hat \tau_B \otimes \hat \rho_W \to \hat U \hat \rho_S \otimes \hat \tau_B \otimes \hat \rho_W \hat U^\dag$, where $\hat U$ is the energy-conserving and translationally invariant unitary \cite{Alhambra2016}, the tight Second Law can be written in the form:
\begin{equation} \label{tight_second_law}
%W \le R(\hat \sigma_S \otimes \hat \tau_B, \hat H_S + \hat H_B)    
W \le R(\hat \sigma_S \otimes \hat \tau_B)    
\end{equation}
where $\hat \tau_B = \mathcal{Z}_B^{-1} e^{-\beta \hat H_B}$ is the Gibbs state of the bath and $R(\hat \rho)$ is the ergotropy of the state $\hat \rho$ \cite{Allahverdyan2004}:
\begin{equation}
R(\hat \rho) := \max_{\hat U - \text{unitary}} \Tr[(\hat H - \hat U^\dag \hat H \hat U) \hat \rho], 
\end{equation}
i.e. the maximal energy extracted from an arbitrary unitary channel $\hat \rho \to \hat U \hat \rho \hat U^\dag$ with fixed Hamiltonian $\hat H$ (according to the quantity $R(\hat \sigma_S \otimes \hat \tau_B)$ the constant Hamiltonian is equal to $\hat H_S + \hat H_B$). Further, $\hat \sigma_S$ is the so-called \emph{control-marginal state} \cite{Lobejko2020}, which for initial product states, i.e. $\hat \rho_{SW} = \hat \rho_S \otimes \hat \rho_W$, 
can be represented as:
\begin{equation} \label{sigma}
    %\hat \sigma_S = \Tr[\hat S \hat \rho_S \otimes \hat \rho_W \hat S^\dag],
    \hat \sigma_S = \int dt \ p(t) \ \hat U_t \hat \rho_S \hat U_t^\dag
\end{equation}
where %$p(t) = \bra t \hat \rho_W \ket t$, 
$p(t) = \Tr[\hat \rho_W \dyad{t}_W]$, 
$\hat U_t = e^{-i \hat H_S t}$ and $\ket t_W$ is a canonically conjugate `time state' with respect to the energy states $\ket \varepsilon_W$. Despite that in this paper we only consider product states, nevertheless, we stress that definition of the control-marginal operator $\hat \sigma_S$ can be generalized for arbitrary correlated state $\hat \rho_{SW}$, such that inequality \eqref{tight_second_law} is still valid (see Appendix). 

Inequality \eqref{tight_second_law} reveals that the optimal work done on a weight via energy-conserving unitary dynamics is equal to the ergotropy of the composite state $\hat \sigma_S \otimes \hat \tau_B$. On the other hand, a concept of the ergotropy as the maximal extractable work arises from the cyclic non-autonomous protocols of closed quantum systems (with implicit work reservoirs) \cite{Allahverdyan2004}, what is intensively studied area of so-called `quantum batteries' \cite{Alicki2013, Hovhannisyan2013, Giorgi2015, Binder2015, Perarnau2015, Campaioli2017}. This proves the important connection between those two frameworks, however, at the same time, it emphasizes the fundamental difference between them, namely, a replacement of the marginal state $\hat \rho_S$ to the control-marginal state $\hat \sigma_S$. As it is presented below, this change significantly affects the work extraction from quantum coherences and shows that the non-autonomous framework is just a special example of the weight dynamics. 

%As was mentioned previously, the tight inequality \eqref{tight_second_law} involves two important microscopic effects: an ability of extraction work from coherences and finite-size heat bath effects. Now, we would like to discuss both of those separately. 

%\emph{Work extraction from coherences.} %Let us start with quantum coherences. 
\subsection*{Work extraction from coherences}
%One of the most important difference between classical and quantum thermodynamics is that quantum systems are able to perform the work via coherences. However, we reveal below that it is only possible if the work reservoir has coherences as well. In other words, for the quantum process of work extraction it is crucial that the weight is the energy reservoir likewise that it is the reservoir of coherences.
Firstly, one should notice that the optimal work $W_{max} = R(\hat \sigma_S \otimes \hat \tau_B)$ depends implicitly on the state of the weight $\hat \rho_W$ through the control-marginal state $\hat \sigma_S$. We point out, however, that the channel \eqref{sigma} only affects the off-diagonal elements of the density matrix $\hat \rho_S$, i.e. if $[\hat \rho_S, \hat H_S] = 0$ then $\hat \sigma_S = \hat \rho_S$. From this follows that for quasi-classical diagonal states the optimal work extraction protocol is independent of the weight state at all. Nevertheless, if we consider a coherent state $\hat \rho_S$ it is no longer true, i.e. in general $\hat \sigma_S \neq \hat \rho_S$, and the optimal value of work $W_{max}$ indirectly depends on the state of the weight (and especially of its amount of coherences). Since the channel \eqref{sigma} is a mixture of unitaries we can define a non-negative quantity:
\begin{equation} \label{locked_ergotropy}
    \Delta_L(\hat \rho_S, \hat \rho_W, \hat \tau_B) := R(\hat \rho_S \otimes \hat \tau_B) - R(\hat \sigma_S \otimes \hat \tau_B) \ge 0,
    %\Delta_L := R(\hat \rho_S \otimes \hat \tau_B) - R(\hat \sigma_S \otimes \hat \tau_B) \ge 0,
    %W_L := R(\hat \rho_S \otimes \hat \tau_B) - R(\hat \sigma_S \otimes \hat \tau_B) \ge 0,
    %\Delta_L := R(\hat \rho_S \otimes \hat \tau_B) - R(\hat \sigma_S \otimes \hat \tau_B) \ge 0,
\end{equation}
and then the Second Law can be expressed in the form: 
\begin{equation} \label{tight_with_locked}
    W \le R(\hat \rho_S \otimes \hat \tau_B) - \Delta_L(\hat \rho_S, \hat \rho_W, \hat \tau_B).
\end{equation}
Here, we call $\Delta_L(\hat \rho_S, \hat \rho_W, \hat \tau_B)$ a \emph{locked energy}, i.e. a quantum thermodynamic resource that is bounded in coherences and cannot be extracted as a work via a state of the weight $\hat \rho_W$. 

In this way, we can introduce a concept of the \emph{ideal weight}, i.e. an energy storage system that is able to the full work extraction from coherences with $\Delta_L = 0$. In particular, this is the case if the state of the weight tends to the time state, i.e. $\hat \rho_W \to \dyad{t}_W$, such that we have $\hat \sigma_S \to  \hat U_t \hat \rho_S \hat U_t^\dag$ and $\Delta_L \to 0$. The time state of the weight is an extreme and idealized example of the system with an `infinite amount of coherence' (see e.g. \cite{Korzekwa2016, Aberg2013}), and in this sense it is able to perform a unitary transformation on the subsystem and achieve the optimal work extraction. In the opposite limit, where the work storage tends to the energy eigenstate, i.e. $\hat \rho_W \to \dyad{\varepsilon}_W$, the control-marginal state loses all of the coherences, such that $\hat \sigma_S \to D[\hat \rho_S]$ (where $D[\cdot]$ is a dephasing channel in the energy basis), and hence $\Delta_L$ is maximal.

The fact that $\hat \sigma_S = D[\hat \rho_S]$ for incoherent states of the work reservoirs was previously observed and called `work-locking' \cite{Korzekwa2016}. In this research, authors discuss only the diagonal states of the energy storage and the work extraction from coherences was analyzed via additional ancillary system, a source of coherence, acting as a catalyst. Here, in contrary, we allow to use the coherent states of the weight (i.e. a fully quantum energy storage) and reveal how this can `unlocked' the extracted work. %In this case, the state $\hat \rho_S$ corresponds to the total available resource (including any ancillary system) and $\Delta_L$ quantifies the total locked energy. 

%\emph{Ergotropy vs. Free energy.} 
\subsection*{Ergotropy vs. Free energy}
Inequality \eqref{tight_with_locked} expresses the tight form of the Second Law for coherent quantum systems and finite-size heat baths, which separately includes the locked energy $\Delta_L(\hat \rho_S, \hat \rho_W, \hat \tau_B)$ and optimal work replaced by the ergotropy (instead of free energy) $R(\hat \rho_S \otimes \hat \tau_B)$. Those quantities depend on the heat bath equilibrium state $\hat \tau_B$ that is defined both by the temperature $T$ and the Hamiltonian $\hat H_B$. In contrary, the only information coming from the heat bath included in the Second Law given by Eq. \eqref{second_law}, formulated solely in terms of the free energy, is the temperature $T$. This ignorance of the microscopic details of the heat bath as a consequence makes the inequality in general not tight. 

Now, we would like to state a general relation between ergotropy and free energy for quantum systems coupled to the heat bath. We independently prove that for arbitrary quantum state $\hat \rho_S$ with Hamiltonian $\hat H_S$ and arbitrary Gibbs state $\hat \tau_B$ with Hamiltonian $\hat H_B$ the following inequality holds:
\begin{equation} \label{ergotropy_vs_free_energy}
    R(\hat \rho_S \otimes \hat \tau_B) \le F(\hat \rho_S) - F(\hat \tau_S),
\end{equation}
where $\hat \tau_S$ is the Gibbs state according to the Hamiltonian $\hat H_S$. %and temperature $T$, where latter is inherited from the bath. 
In this formula, both the Gibbs states and free energy are defined with respect to the same and arbitrary temperature $T$. 

As it is seen, the right hand side of the inequality  \eqref{ergotropy_vs_free_energy} does not depend on the Hamiltonian $\hat H_B$, what reveals that free energy is indeed the ultimate thermodynamic bound valid for all possible heat baths (see Eq. \eqref{second_law}). %The interesting question is what are models of the Hamiltonian $\hat H_B$ that lead to the saturation of this inequality?  %We refer to this class of the environments, achieved in the thermodynamic limit, as the \emph{generic heat baths}, and we simply define it by an equality of the formula \eqref{ergotropy_vs_free_energy}. 
According to the result of Skrzypczyk \textit{et. al.} \cite{Skrzypczyk2014}, the bound can be reached for infinite heat baths, i.e. in the thermodynamic limit. This relation proves the Second Law of Thermodynamics for any framework with the optimal extracted work identified as the ergotropy (e.g. for the unitary transformations of states), and shows that ergotropy on it its own is a generalization of free energy for the finite-size heat baths.

Further, by this formula, we are able to derive the thermodynamic limit of the locked energy $\Delta_L$. Basically, if for the heat bath holds $R(\hat \rho_S \otimes \hat \tau_B) = F(\hat \rho_S) - F(\hat \tau_S)$ for arbitrary density matrix $\hat \rho_S$, then 
\begin{equation} \label{locked_ergotropy_bound}
     \Delta_L(\hat \rho_S, \hat \rho_W, \hat \tau_B) = T \left[ S(\hat \rho_S) - S(\hat \sigma_S) \right],
\end{equation}
i.e. the locked energy in coherences is equal to the difference of entropy between the state $\hat \rho_S$ and control-marginal state $\hat \sigma_S$ multiplied by the bath temperature $T$. Equation \eqref{locked_ergotropy_bound} gives us an interesting formula how the quantum energy storage is able to extract work from coherences of a system in a contact with the macroscopic heat bath. However, we would like to emphasize that this formula is not the upper bound of the locked energy $\Delta_L$ (as the free energy was for the optimal work), but rather it is the thermodynamic limit. In the next paragraph we provide a numerical simulation of a particular example where the locked energy for finite-size bath can be bigger then value given by Eq. \eqref{locked_ergotropy_bound}, and moreover it can be even non-monotonic with respect to the growing size of the heat bath. 

%One should additionally notice that similarly to Eq. \eqref{ergotropy_vs_free_energy}, inequality in the form $R(\hat \rho) \le F(\hat \rho) - F(\hat \tau)$ was previously proven \cite{Allahverdyan2004}. However, in this case the temperature $T$, appearing in the definition of the Gibbs state $\hat \tau$ and free energy, is fixed, and defined by the constant entropy relation: $S(\hat \rho) = S(\hat \tau$). In contrary, in the Eq. \eqref{ergotropy_vs_free_energy} the bath $\hat \tau_B$ is explicitly introduced (with arbitrary temperature $T$) as the ancillary system with respect to the proper thermodynamic resource given by the non-equilibrium state $\hat \rho_S$. This refers to the the common physical situation of the quantum system coupled to the heat bath. As a consequence, free energy is defined only with respect to the system $S$ and with temperature inherited from the bath. 

\begin{figure}[t]
    \centering
    \includegraphics[width = 0.45\textwidth] {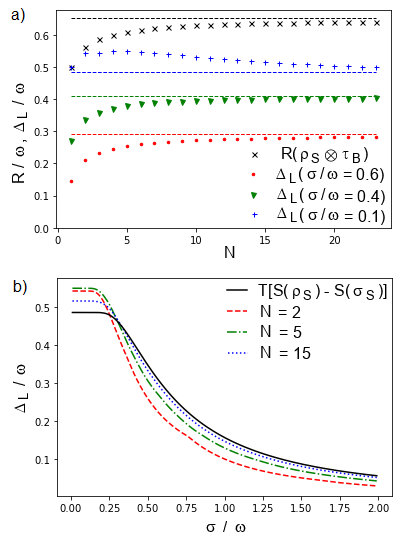}
    \caption{\emph{Optimal work and locked energy.} (a) Graph presents how the optimal work given by the ergotropy $R(\hat \rho_S \otimes \hat \tau_B^{(N)})$ and locked energy $\Delta_L(\hat \rho_S, \hat \rho_W, \hat \tau_B^{(N)})$ depends on the number of qubits in the heat bath $N$ for different values of the scaled standard deviation $\sigma/\omega$ of the Gaussian state of the weight. Horizontal lines correspond 
   to the thermodynamic limits (with $N \to \infty$) given by the free energy $F(\hat \rho_S)-F(\hat \tau_S)$ and entropy $T[S(\hat \rho_S) - S(\hat \sigma_S)]$ differences. (b) The vanishing of the locked energy for different size of the heat bath $N$ with respect to the parameter $\sigma/ \omega$.}
    \label{fig:free_energy_vs_ergotropy}
\end{figure}

%\emph{Example.}  
\subsection*{Example}
Let us now consider a particular example in order to illustrate how the finite-size bath and state of the weight affect the work extraction process. We would like to concentrate on a system $S$ given by the qubit in a coherent `plus state', i.e. $\hat \rho_S = \dyad{+}_S$, where $\ket{+}_S = \frac{1}{\sqrt{2}} (\ket{0}_S + \ket{1}_S)$, and with Hamiltonian $\hat H_S = \omega \dyad{1}_S$. Next, as a model of a bath we take a collection of qubits with different energy gaps, namely the bath Hamiltonian is given by:
\begin{equation} \label{bath_hamiltonian}
    \hat H_B^{(N)} = \bigotimes_{k=1}^N \omega_k \dyad{1_k}_B
\end{equation}
where $\omega_k = T \log[\frac{1-k\delta}{k\delta}]$ and $\delta = \mathcal{Z}^{-1}_S e^{-\beta \omega}/N$. The choice of the heat bath is dictated by its property that in the limit of infinite number of qubits a saturation of inequality \eqref{ergotropy_vs_free_energy} is achieved \cite{Skrzypczyk2014}. Finally, we take the weight in a pure state given by a Gaussian superposition of energy states, i.e. $\hat \rho_W = \dyad{\psi}_W$ such that
\begin{equation} \label{weight_state}
    \ket{\psi}_W = (2\pi \sigma^2)^{-1/4} \int d\varepsilon \ e^{-\frac{\epsilon^2}{4\sigma^2}} \ket{\varepsilon}_W,
\end{equation}
where the vector is solely parameterized by the standard deviation $\sigma$.

Within this model we numerically calculate the optimal work 
\begin{equation}
    W_{max} = R \left(\dyad{+}_S \otimes \hat \tau_B^{(N)} \right)
\end{equation}
for baths $\hat \tau_B^{(N)}$ with different number of qubits $N$ \eqref{bath_hamiltonian}. The results are presented in the Fig. 1(a). In particular, in the graph we plot the ultimate bound given by the difference of free energy $F(\hat \rho_S) - F(\hat \tau_S)$ (see inequality \eqref{ergotropy_vs_free_energy}) and reveal how the optimal work $W_{max}$ converges to this limit with increasing number of qubits. We emphasize that ergotropy $R (\hat \rho_S \otimes \hat \tau_B^{(N)})$ is an increasing function with respect to the growing size of the heat bath $N$.

Next, we calculate the locked energy, namely
% \begin{equation}
% \begin{split}
%     &\Delta_L (\hat \rho_S, \hat \rho_W, \hat \tau_B) = \\
%     &= R \Big(\dyad{+}_S \otimes \hat \tau_B^{(N)} \Big) - R \Big(\hat \xi_S \otimes \hat \tau_B^{(N)} \Big) - \frac{\omega \gamma}{2},
% \end{split}
% \end{equation}
\begin{equation}
    \Delta_L = R \left(\dyad{+}_S \otimes \hat \tau_B^{(N)} \right) - R \left(\hat \xi_S \otimes \hat \tau_B^{(N)} \right) - \frac{\omega \gamma}{2},
\end{equation}
where $\gamma = \exp[-\frac{\omega^2}{8\sigma^2}]$ and %$\hat \xi_S = \frac{1}{2}(1+\gamma) \dyad{0}_S + \frac{1}{2} (1-\gamma) \dyad{1}_S$.
\begin{equation}
\hat \xi_S = \frac{1}{2}(1+\gamma) \dyad{0}_S + \frac{1}{2} (1-\gamma) \dyad{1}_S.
\end{equation}
It is seen that the Gaussian model of the weight affects the locked energy only via a single parameter $\sigma / \omega$, i.e. a ratio between a standard deviation of the work reservoir wave packet $\sigma$ and the energy gap of a qubit $\omega$. In analogy to the optimal work, in the Fig. 1(a) we analyze how the locked energy depends on the number of qubits in the bath $N$ and compare it to the thermodynamic limit given by $T[S(\hat \rho_S) - S(\hat \sigma_S)]$ \eqref{locked_ergotropy_bound}. An intriguing observation is that as long as the ergotropy is an increasing function with respect to the growing size of the heat bath, the locked energy is not. It is observed that for some values of $\sigma / \omega$ it can be non-monotonic with respect to the number of qubits $N$, i.e. adding a qubit can increase likewise decrease the locked energy.   

Further, in the Fig 1(b) we present how quickly the locked energy vanishes with increasing value of the ratio $\sigma / \omega$. Two interesting features are observed here. First for high values of $\sigma / \omega$ the locked energy is increasing with the size of the bath $N$, however, this order is changed for low values and becomes non-monotonic. Secondly, for low values we observe a \emph{plateau}, i.e. the locked energy almost stay constant with growing width of the weight wave packet. Notice that in the limit $\sigma \to 0$ state of the weight tends to the energy state with the maximal locked energy and in the limit $\sigma \to \infty$ it tends to the time state for which the locked energy vanishes.  

%\emph{Conclusions.} 
\subsection*{Conclusions}
We recognized ergotropy as a proper resource regarding the work extraction process with an explicit energy storage given by the translationally invariant weight. The ergotropy on its own can be defined as the optimal work extracted from closed systems driven by the time-dependent and cyclic Hamiltonians, what proofs an important connection between those two frameworks. Nevertheless, we stress that there is no full equivalence between them, since models with an implicit energy storage do not involve the concept of the \emph{locked energy}, i.e. the part from coherences that contribute to ergotropy (or free energy) but cannot be extracted as a work. Indeed, one of the main difference between classical and quantum thermodynamics is that quantum systems are able to perform the work via coherences. However, here we reveal that it is only possible if the work reservoir has coherences as well and the locked energy naturally emerges if we treat it explicitly. In other words, for the quantum process of work extraction it is crucial that the weight is the energy reservoir and the reservoir of coherences likewise. Consequently, we provide a quantitative definition of the \emph{ideal work reservoir}, i.e. the energy storage that is able to the full work extraction from coherences, what really is the case in non-autonomous approach. 

Furthermore, we analyze the ergotropy of the non-equilibrium quantum system in a contact with arbitrary finite-size heat bath. In the light of the resource theory, such a Gibbs state of the bath is treated as a costless, i.e. it can be for free attached to, and discarded from the system. Due to the non-additivity of the ergotropy, the state of the heat bath activates the non-equilibrium state of the system and consequently both of them form the entire thermodynamic resource, given by the total ergotropy. This can be simply interpreted as a maximal work that can be extracted from such a quantum state. 
Moreover, one of the most important result of this work is an establishment of the general relation between the ergotropy and free energy for systems coupled to the heat bath, what provides a bridge between microscopic and macroscopic thermodynamics. We show that the total ergotropy of the quantum system and finite-size heat bath is indeed the generalization of the non-equilibrium free energy, and it converges to the latter in the thermodynamic limit. %Further, this naturally leads to the definition of the \emph{generic heat bath}, i.e. the one which provides an ability of the optimal work extraction. 

Finally, relation between ergotropy and free energy leads us to the thermodynamic limit of the locked energy. This provides an interesting formula, expressed in terms of the von Neumann entropy, that from one side is fully quantum, since refers to the extraction of work from coherences (i.e. requires the coherent state of the system and the energy storage likewise), however, on the other side involves the classical notion of the macroscopic heat bath. 

\begin{acknowledgements}
The author thanks Micha{\l} Horodecki, Pawe{\l} Mazurek,  Tony  Short and Patryk Lipka-Bartosik for helpful and inspiring  discussions.  This research was supported by the National Science Centre, Poland, through grant  SONATINA 2 2018/28/C/ST2/00364.
\end{acknowledgements}

\newpage

\bibliography{bib}

\onecolumngrid
\newpage
\appendix

\section{}
\subsection{Average energy, passive energy and ergotropy}
Let us consider a system $\hat \rho$ with free Hamiltonian $\hat H$. We define the following quantities:
\begin{equation}
    E(\hat \rho) = \Tr[\hat H \hat \rho], \  P(\hat \rho) = \min_{\hat V} \Tr[ \hat V^\dag \hat H \hat V \hat \rho], \ R(\hat \rho) = E(\hat \rho) - P(\hat \rho), 
\end{equation}
i.e. average energy $E(\hat \rho)$, passive energy $P(\hat \rho)$ and ergotropy $R(\hat \rho)$, 
% - passive energy:
% \begin{equation}
%     P(\hat \rho) = \min_{\hat V} \Tr[ \hat V \hat H \hat V^\dag \hat \rho]
% \end{equation}
% - ergotropy:
% \begin{equation}
%     R(\hat \rho) = E(\hat \rho) - P(\hat \rho) 
% \end{equation}
where $\hat V$ is arbitrary unitary acting on the system Hilbert space. 
\subsection{Energy and time states of the weight}
We assume that the energy storage is given by the weight $\mathcal{W}$. In particular, this implies that its energy spectrum is continuous, namely
\begin{equation}
    \hat H_W = \int d \varepsilon \ \varepsilon \dyad{\varepsilon}_W.
\end{equation}
We further define time states $\ket{t}_W$ as the canonically conjugates states with respect to the energy states $\ket {\varepsilon}_W$, i.e. transition from one to another basis is given by the Fourier transform:
\begin{equation}
    \ket{t}_W = \int d \varepsilon \ e^{i \varepsilon t} \ket{\varepsilon}_W. 
\end{equation}
Next, we consider a shift operator which is defined by an action on energy eigenstates: $\hat \Gamma_\delta \ket{\varepsilon}_W = \ket{\varepsilon + \delta}_W$. According to this relation it follows that:
\begin{equation}
   \hat \Gamma_\delta \ket{t}_W = \int d \varepsilon \ e^{i \varepsilon t} \ \hat \Gamma_\delta \ket{\varepsilon}_W = \int d \varepsilon \ e^{i \varepsilon t} \ket{\varepsilon + \delta}_W = e^{-i \delta t} \int d \varepsilon \ e^{i \varepsilon t} \ket{\varepsilon}_W = e^{-i \delta t} \ket{t}_W,
\end{equation}
i.e. time states are eigenstates of the shift operator. 
\subsection{Energy-conserving and translationally invariant unitary}
We consider a quantum system $\mathcal{S}$ coupled to a heat bath $\mathcal{B}$, prepared in a Gibbs state $\hat \tau_B$, and the weight $\mathcal{W}$. Then, we investigate a unitary protocol  
$\hat \rho_{SW} \otimes \hat \tau_B \to \hat U \hat \rho_{SW} \otimes \hat \tau_B \hat U^\dag$, such that the evolution operator is the energy-conserving translationally invariant unitary, i.e. it satisfies the following commutation relations: 
\begin{equation} \label{commutation_relations}
   [\hat U, \hat H_S + \hat H_B + \hat H_W] = 0, \  [\hat U, \hat \Gamma_{\epsilon}] = 0
\end{equation}
where $\hat H_k$ is a free Hamiltonian of $k=S,B,W$ subsystem and $\hat \Gamma_\epsilon$ is a weight shift operator (for arbitrary real $\epsilon$). 

In the paper \cite{Lobejko2020} it was proven that unitary $\hat U$ obeying conditions \eqref{commutation_relations} can be always written in the form:
\begin{equation} \label{unitary_form}
    \hat U = \hat S^\dag (\hat V_{SB} \otimes \mathbb{1}_W) \hat S,
\end{equation}
where $\hat V_{SB}$ is some unitary acting on the system and bath Hilbert space and $\mathbb{1}_W$ is the identity operator acting on the weight, and $\hat S$ is a kind of \emph{control-shift operator} defined as follows:
\begin{equation}
    \hat S = \sum_{i,j} \dyad{\epsilon_i}_{S} \otimes \dyad{\varepsilon_j}_{B} \otimes \hat \Gamma_{\epsilon_i+\varepsilon_j}
\end{equation}
where $\ket{\epsilon_i}_{S}$ is an eigenstate of the system Hamiltonian $\hat H_S$ and $\ket{\varepsilon_j}_{B}$ is an eigenstate of the $\hat H_B$. 

\subsection{Work and control-marginal state}
From the Eq. \eqref{unitary_form} further follows that work is equal to:
\begin{equation} \label{work_appendix}
    W = \Tr[\hat H_W (\hat U \hat \rho_{SBW} \hat U^\dag - \hat \rho_{SBW})] =  \Tr[\hat H_{SB} (\hat \sigma_{SB} - \hat V_{SB} \hat \sigma_{SB} \hat V_{SB}^\dag)],
\end{equation}
where $\hat \sigma_{SB} = \Tr_W[\hat S \hat \rho_{SBW} \hat S^\dag]$ is the so-called \emph{control-marginal state}. For the product state $\hat \rho_{SBW} = \hat \rho_S \otimes \hat \tau_B \otimes \hat \rho_W$, we have
\begin{equation}
\begin{split}
    \hat \sigma_{SB} &= \Tr_W [ \hat \Gamma_{\epsilon_i + \varepsilon_k} \hat \rho_W \hat \Gamma_{\epsilon_j + \varepsilon_l}^\dag] \sum_{i,j,k,l} \dyad{\epsilon_i}_{S} \hat \rho_S \dyad{\epsilon_j}_{S} \otimes \dyad{\varepsilon_k}_{B} \hat \tau_B \dyad{\varepsilon_l}_{B}  \\
    &= \Tr_W [ \hat \Gamma_{\epsilon_i} \hat \rho_W \hat \Gamma_{\epsilon_j}^\dag] \sum_{i,j} \dyad{\epsilon_i}_{S} \hat \rho_S \dyad{\epsilon_j}_{S} \otimes \hat \tau_B \equiv \hat \sigma_S \otimes \hat \tau_B.
\end{split}
\end{equation}

Now, we would like to derive an alternative form for the control-marginal state $\hat \sigma_S$ (for the product states). Let us represent the density matrix of the weight in a time states basis, i.e. $\hat \rho_W = \int dt \ ds \ \dyad{t}_W \hat \rho_W \dyad{s}_W$. Putting it into the above formula we obtain:
\begin{equation} \label{sigma_appendix}
\begin{split}
    \hat \sigma_S &= \int dt \ ds \ \Tr_W [ \hat \Gamma_{\epsilon_i} \dyad{t}_W \hat \rho_W \dyad{s}_W \hat \Gamma_{\epsilon_j}^\dag] \sum_{i,j} \dyad{\epsilon_i}_{S} \hat \rho_S \dyad{\epsilon_j}_{S} \\
    &= \int dt \ ds \ \Tr_W [ e^{-i\epsilon_i t} \dyad{t}_W \hat \rho_W  \dyad{s}_W e^{i\epsilon_j s}] \sum_{i,j} \dyad{\epsilon_i}_{S} \hat \rho_S \dyad{\epsilon_j}_{S} = \int dt \ p(t) \ e^{-i \hat H_S t} \hat \rho_S  e^{i \hat H_S t}
\end{split}
\end{equation}
where $p(t) = \Tr_W[\hat \rho_W \dyad{t}_W]$. 

\subsection{Optimal work extraction}
\begin{theorem}
For arbitrary transition $\hat \rho_S \to \hat \rho_S' = \Tr_{BW} [\hat U \hat \rho_{SW} \otimes \hat \tau_B \hat U^\dag]$ the work extracted by the weight is equal to:
\begin{equation}
    W = R(\hat \sigma_S \otimes \hat \tau_B) - R(\hat V_{SB} \hat \sigma_S \otimes \hat \tau_B \hat V_{SB}^\dag).
\end{equation}
Moreover, there exist a unitary $\hat V_{SB}$ such that $R(\hat V_{SB} \hat \sigma_S \otimes \hat \tau_B \hat V_{SB}^\dag) = 0$, and then the optimal work is given by:
\begin{equation} \label{optimal_tight}
    W_{max} = R(\hat \sigma_S \otimes \hat \tau_B). 
\end{equation}
\end{theorem}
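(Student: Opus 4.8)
The plan is to reduce the statement to two elementary facts about the ergotropy once Eq.~\eqref{work_appendix} is in hand. For the product input $\hat\rho_{SBW} = \hat\rho_S\otimes\hat\tau_B\otimes\hat\rho_W$ the earlier computation gives $\hat\sigma_{SB} = \hat\sigma_S\otimes\hat\tau_B$ and $W = \Tr[\hat H_{SB}(\hat\sigma_S\otimes\hat\tau_B)] - \Tr[\hat H_{SB}\,\hat V_{SB}(\hat\sigma_S\otimes\hat\tau_B)\hat V_{SB}^\dag]$ with $\hat H_{SB} = \hat H_S + \hat H_B$; that is, $W = E(\hat\sigma_S\otimes\hat\tau_B) - E\big(\hat V_{SB}(\hat\sigma_S\otimes\hat\tau_B)\hat V_{SB}^\dag\big)$. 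Inserting the identity $E(\hat\rho) = R(\hat\rho) + P(\hat\rho)$ from the appendix definitions in both terms and cancelling the passive parts (justified in the next paragraph) yields the first claimed equality $W = R(\hat\sigma_S\otimes\hat\tau_B) - R\big(\hat V_{SB}(\hat\sigma_S\otimes\hat\tau_B)\hat V_{SB}^\dag\big)$. This holds for whatever admissible $\hat U$, hence whatever $\hat V_{SB}$, is chosen, which is exactly the ``arbitrary transition'' clause.

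The cancellation rests on the unitary invariance of the passive energy, which I would prove directly from $P(\hat\rho) = \min_{\hat W}\Tr[\hat W^\dag\hat H\hat W\hat\rho]$: since $\Tr[\hat W^\dag\hat H\hat W\,\hat V\hat\rho\hat V^\dag] = \Tr[(\hat W\hat V)^\dag\hat H(\hat W\hat V)\hat\rho]$ and $\hat W\mapsto\hat W\hat V$ is a bijection of the unitary group, one gets $P(\hat V\hat\rho\hat V^\dag) = P(\hat\rho)$. The very same relabeling makes clear that the ergotropy is \emph{not} unitarily invariant, which is precisely why the two ergotropy terms do not collapse.

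For the ``moreover'' part I would invoke the standard structural fact that for any Hamiltonian and any state there is a unitary rotating the state into its associated passive state — the one diagonal in the energy eigenbasis whose populations are nonincreasing along nondecreasing energies — and that such a state has, by construction, zero ergotropy. Applying this to the operator $\hat\sigma_S\otimes\hat\tau_B$ with Hamiltonian $\hat H_S+\hat H_B$ produces a $\hat V_{SB}$ with $R\big(\hat V_{SB}(\hat\sigma_S\otimes\hat\tau_B)\hat V_{SB}^\dag\big)=0$. Because the decomposition~\eqref{unitary_form} shows that \emph{every} unitary $\hat V_{SB}$ on $SB$ arises from some admissible energy-conserving, translationally invariant $\hat U$, this choice is legitimate; and since $R\ge 0$ always, no admissible protocol can yield more than $R(\hat\sigma_S\otimes\hat\tau_B)$, giving $W_{max} = R(\hat\sigma_S\otimes\hat\tau_B)$, i.e.\ Eq.~\eqref{optimal_tight}.

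I do not expect a genuine obstacle here: the physical content was already carried by Eqs.~\eqref{unitary_form} and \eqref{work_appendix}. The one point requiring care is the admissibility bookkeeping — confirming that maximizing $W$ over all global unitaries constrained by~\eqref{commutation_relations} is the same as a free maximization over $\hat V_{SB}$ on $SB$ — but this is exactly what the representation~\eqref{unitary_form} delivers, so the remainder is purely the two ergotropy facts above.
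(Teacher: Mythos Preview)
Your proposal is correct and follows essentially the same line as the paper. The paper adds and subtracts the passive energy of $\hat\sigma_{SB}$ and then uses the bijection $\hat U\mapsto\hat U\hat V_{SB}$ to identify the second bracket as $R(\hat V_{SB}\hat\sigma_{SB}\hat V_{SB}^\dag)$; this is exactly your decomposition $E=R+P$ together with the unitary invariance $P(\hat V\hat\rho\hat V^\dag)=P(\hat\rho)$, and the ``moreover'' step (existence of a unitary bringing any state to its passive form) is stated identically in both.
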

\begin{proof}
According to Eq. \eqref{work_appendix} we have
\begin{equation}
\begin{split}
    W &= \Tr[\hat H_{SB} (\hat \sigma_{SB} - \hat V_{SB} \hat \sigma_{SB} \hat V_{SB}^\dag)] - \max_{\hat U - \text{unitary}} \Tr[\hat U \hat \sigma_{SB} \hat U^\dag] + \max_{\hat U - \text{unitary}} \Tr[\hat U \hat \sigma_{SB} \hat U^\dag] \\
    &= \max_{\hat U - \text{unitary}} \Tr[\hat H_{SB} (\hat \sigma_{SB} - \hat U \hat \sigma_{SB} \hat U^\dag)] + \max_{\hat U - \text{unitary}} \Tr[\hat H_{SB} (\hat U \hat \sigma_{SB} \hat U^\dag - \hat V_{SB} \hat \sigma_{SB} \hat V_{SB}^\dag)] \\
    &= R(\hat \sigma_{SB}) - \max_{\hat U - \text{unitary}} \Tr[\hat H_{SB} (\hat V_{SB} \hat \sigma_{SB} \hat V_{SB}^\dag - \hat U \hat V_{SB} \hat \sigma_{SB} \hat V_{SB}^\dag \hat U^\dag)] = R(\hat \sigma_{SB}) - R(\hat V_{SB} \hat \sigma_{SB} \hat V_{SB}^\dag)
\end{split}
\end{equation}
The second part follows from the fact that the arbitrary state $\hat \sigma_{SB}$ can be unitarly transformed to the passive state (i.e. with zero ergotropy).
\end{proof}

\subsection{Ergotropy and free energy}
\begin{theorem}
Let $\hat \rho_S$ and $\hat \xi_S$ are arbitrary quantum states and $\hat \tau_B$ is the Gibbs state. Further, $\hat \rho_p$ and $\hat \xi_p$ are passive states (i.e. with minimal energy) obtained through the unitary channel from $\hat \rho_S \otimes \hat \tau_B$ and $\hat \xi_S \otimes \hat \tau_B$, respectively. Then, if $F(\hat \xi_p) \le F(\hat \rho_p)$ it implies that
\begin{equation} \label{ergotropy_vs_free_energy_diff_apen}
    R(\hat \rho_S \otimes \hat \tau_B) - R(\hat \xi_S \otimes \hat \tau_B) \le F(\hat \rho_S) - F(\hat \xi_S).
\end{equation}
\end{theorem}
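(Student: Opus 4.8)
The plan is to rewrite both sides of \eqref{ergotropy_vs_free_energy_diff_apen} entirely in terms of average energies and von Neumann entropies, so that the statement collapses onto the hypothesis $F(\hat\xi_p)\le F(\hat\rho_p)$. Throughout, $E$, $S$ and $F$ carry the Hamiltonian appropriate to their argument --- $\hat H_S$ for a state on $S$ alone, and $\hat H_S+\hat H_B$ for a state on $SB$ such as $\hat\rho_p$, $\hat\xi_p$, or a product with $\hat\tau_B$. The only structural inputs are that a unitary channel leaves the spectrum invariant (hence preserves the von Neumann entropy) and that $E$ and $S$ are additive across a tensor product.

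First I would use $R(\hat\rho)=E(\hat\rho)-P(\hat\rho)$ together with the defining property of the passive state, $E(\hat\rho_p)=P(\hat\rho_S\otimes\hat\tau_B)$, to write $R(\hat\rho_S\otimes\hat\tau_B)=E(\hat\rho_S)+E(\hat\tau_B)-E(\hat\rho_p)$ and the analogous identity for $\hat\xi_S$. Subtracting, the bath term $E(\hat\tau_B)$ cancels, giving
\[
 R(\hat\rho_S\otimes\hat\tau_B)-R(\hat\xi_S\otimes\hat\tau_B)=[E(\hat\rho_S)-E(\hat\xi_S)]-[E(\hat\rho_p)-E(\hat\xi_p)].
\]
Since the right-hand side of \eqref{ergotropy_vs_free_energy_diff_apen} equals $[E(\hat\rho_S)-E(\hat\xi_S)]-T[S(\hat\rho_S)-S(\hat\xi_S)]$, the claimed inequality is equivalent to $E(\hat\rho_p)-E(\hat\xi_p)\ge T[S(\hat\rho_S)-S(\hat\xi_S)]$.

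Next I would invoke unitary invariance of the entropy: $S(\hat\rho_p)=S(\hat\rho_S\otimes\hat\tau_B)=S(\hat\rho_S)+S(\hat\tau_B)$, and likewise for $\hat\xi$, so that $S(\hat\rho_S)-S(\hat\xi_S)=S(\hat\rho_p)-S(\hat\xi_p)$. Substituting turns the inequality above into $E(\hat\rho_p)-T S(\hat\rho_p)\ge E(\hat\xi_p)-T S(\hat\xi_p)$, i.e.\ $F(\hat\rho_p)\ge F(\hat\xi_p)$ with free energies referred to $\hat H_S+\hat H_B$ --- which is precisely the hypothesis, completing the proof. There is no real obstacle here; the only care needed is the Hamiltonian bookkeeping just described, plus noting that the minimum defining $P$ is attained on the compact unitary group, so that $\hat\rho_p$ and $\hat\xi_p$ genuinely exist. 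Finally, to recover the main bound \eqref{ergotropy_vs_free_energy} I would specialize $\hat\xi_S=\hat\tau_S$: then $\hat\xi_S\otimes\hat\tau_B=\hat\tau_S\otimes\hat\tau_B$ is the global Gibbs state of $\hat H_S+\hat H_B$, hence completely passive, so $R(\hat\xi_S\otimes\hat\tau_B)=0$ and $\hat\xi_p=\hat\tau_S\otimes\hat\tau_B$; moreover the global Gibbs state minimizes the free energy, so $F(\hat\xi_p)\le F(\hat\rho_p)$ holds automatically and \eqref{ergotropy_vs_free_energy_diff_apen} reduces to $R(\hat\rho_S\otimes\hat\tau_B)\le F(\hat\rho_S)-F(\hat\tau_S)$.
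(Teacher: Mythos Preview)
Your proof is correct and follows essentially the same route as the paper: both arguments expand the ergotropy difference as a difference of passive energies, use unitary invariance of the von Neumann entropy together with additivity over the tensor product to trade $S(\hat\rho_S)-S(\hat\xi_S)$ for $S(\hat\rho_p)-S(\hat\xi_p)$, and thereby reduce the claim to the hypothesis $F(\hat\rho_p)\ge F(\hat\xi_p)$. Your presentation runs the equivalence in the reverse direction (conclusion $\Leftrightarrow$ hypothesis rather than hypothesis $\Rightarrow$ conclusion) and is a bit more explicit about the Hamiltonian bookkeeping and the existence of the minimizer, but the substance is identical; your closing specialization $\hat\xi_S=\hat\tau_S$ is exactly the paper's Corollary.
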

\begin{proof}
From the definition of the free energy and assumption $F(\hat \xi_p) \le F(\hat \rho_p)$ we obtain:
\begin{equation}
    E(\hat \xi_p) - E(\hat \rho_p) - T [S(\hat \xi_p) - S(\hat \rho_p)] \le 0
\end{equation}
Further, we have $S(\hat \rho_p) = S(\hat \rho_S \otimes \hat \tau_B)$ and $S(\hat \xi_p) = S(\hat \xi_S \otimes \hat \tau_B)$ such that the above inequality can be rewritten in the form:
\begin{equation}
 E(\hat \sigma_p) - E(\hat \xi_S \otimes \hat \tau_B) - E(\hat \rho_p) + E(\hat \rho_S \otimes \hat \tau_B) \le F(\hat \rho_S \otimes \hat \tau_B) - F(\hat \xi_S \otimes \hat \tau_B).
\end{equation}
Finally, since $ F(\hat \rho_S \otimes \hat \tau_B) - F(\hat \xi_S \otimes \hat \tau_B) =  F(\hat \rho_S) - F(\hat \xi_S)$ and $R(\hat \rho_S \otimes \hat \tau_B) = E(\hat \rho_S \otimes \hat \tau_B) - E(\hat \rho_p)$ (and the same for $\hat \xi_S$) we obtain inequality \eqref{ergotropy_vs_free_energy_diff_apen}. 
\end{proof}

\begin{corollary}
For arbitrary state $\hat \rho_S$ and arbitrary Gibbs state $\hat \tau_B$ it is valid:
\begin{equation}
    R(\hat \rho_S \otimes \hat \tau_B) \le F(\hat \rho_S) - F(\hat \tau_S).
\end{equation}
\end{corollary}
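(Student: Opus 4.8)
\textit{Proof proposal.} The plan is to obtain the corollary as the special case of Theorem~2 in which the auxiliary state is taken to be the system's own Gibbs state, $\hat \xi_S = \hat \tau_S$. With this choice the composite reference state appearing in the theorem becomes
\begin{equation}
\hat \xi_S \otimes \hat \tau_B = \hat \tau_S \otimes \hat \tau_B = \mathcal{Z}_S^{-1}\mathcal{Z}_B^{-1}\, e^{-\beta(\hat H_S + \hat H_B)},
\end{equation}
i.e. exactly the Gibbs state of the total Hamiltonian $\hat H_S + \hat H_B$ at the bath temperature $T$.

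First I would note that a Gibbs state is passive: in the energy eigenbasis its eigenvalues are a strictly decreasing function of the energy, which is precisely the defining property of a passive state. Hence $\hat \tau_S \otimes \hat \tau_B$ coincides with its own passive state, $\hat \xi_p = \hat \tau_S \otimes \hat \tau_B$, and in particular $R(\hat \xi_S \otimes \hat \tau_B) = E(\hat \xi_S \otimes \hat \tau_B) - E(\hat \xi_p) = 0$. Second, I would check the hypothesis $F(\hat \xi_p) \le F(\hat \rho_p)$ required by Theorem~2. Since $\hat \xi_p = \hat \tau_S \otimes \hat \tau_B$ is the Gibbs state of $\hat H_S + \hat H_B$ at temperature $T$, and the free energy functional satisfies $F(\hat \omega) - F(\hat \tau) = T\, S(\hat \omega \Vert \hat \tau) \ge 0$ for every state $\hat \omega$, the Gibbs state is the global minimizer of $F$; therefore $F(\hat \xi_p) \le F(\hat \rho_p)$ holds automatically, for \emph{any} passive state $\hat \rho_p$ reached from $\hat \rho_S \otimes \hat \tau_B$.

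With both ingredients in place, Theorem~2 applied to $\hat \rho_S$ and $\hat \xi_S = \hat \tau_S$ yields $R(\hat \rho_S \otimes \hat \tau_B) - R(\hat \tau_S \otimes \hat \tau_B) \le F(\hat \rho_S) - F(\hat \tau_S)$, and using $R(\hat \tau_S \otimes \hat \tau_B) = 0$ gives the claimed inequality. I expect no serious obstacle here; the only point deserving an explicit sentence is the passivity of the product of two commuting Gibbs states at a common temperature (so that $R = 0$ and $\hat \xi_p$ is the reference state itself), together with the elementary variational bound $F(\hat \omega) \ge F(\hat \tau)$ that makes the theorem's hypothesis vacuous in this instance.
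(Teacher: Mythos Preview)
Your proposal is correct and follows essentially the same route as the paper: choose $\hat \xi_S = \hat \tau_S$, observe that $\hat \tau_S \otimes \hat \tau_B$ is its own passive state so $R(\hat \tau_S \otimes \hat \tau_B)=0$, invoke the free-energy minimality of the Gibbs state to satisfy the hypothesis $F(\hat \xi_p)\le F(\hat \rho_p)$, and read off the corollary from Theorem~2. The only difference is cosmetic---you spell out the passivity and the variational bound $F(\hat\omega)-F(\hat\tau)=T\,S(\hat\omega\Vert\hat\tau)\ge 0$ more explicitly than the paper does.
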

\begin{proof}
Let us take the state $\hat \xi_S$ as a Gibbs state in the same temperature as $\hat \tau_B$, i.e. $\hat \xi_S = \hat \tau_S$, then $\hat \xi_p = \hat \tau_S \otimes \hat \tau_B = \hat \xi_S \otimes \hat \tau_B$. Moreover, for arbitrary state $\hat \rho_S$ it is satisfied an inequality $F(\hat \rho_p) \ge F(\hat \xi_p)$, since $\hat \xi_p$ is a Gibbs state for which free energy has minimum. Finally, since $R(\hat \tau_S \otimes \hat \tau_B) = 0$ and from inequality \eqref{ergotropy_vs_free_energy} follows what was to be shown. 
\end{proof}

\subsection{Locked energy}

\begin{proposition}
For the quantum state $\hat \rho_S$ and its control-marginal state $\hat \sigma_S$ \eqref{sigma_appendix} we have $E(\hat \sigma_S \otimes \hat \tau_B) = E(\hat \rho_S \otimes \hat \tau_B)$ and $P(\hat \sigma_S \otimes \hat \tau_B) \ge P(\hat \rho_S \otimes \hat \tau_B)$ from which follows that 
\begin{equation}
    \Delta_L(\hat \rho_S, \hat \rho_W, \hat \tau_B) = R(\hat \rho_S \otimes \hat \tau_B) - R(\hat \sigma_S \otimes \hat \tau_B) \ge 0.
\end{equation}
\end{proposition}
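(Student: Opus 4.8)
The plan is to prove the two stated facts — $E(\hat\sigma_S\otimes\hat\tau_B)=E(\hat\rho_S\otimes\hat\tau_B)$ and $P(\hat\sigma_S\otimes\hat\tau_B)\ge P(\hat\rho_S\otimes\hat\tau_B)$ — separately, and then combine them through the decomposition $R=E-P$. For the energy, I would start from the explicit form $\hat\sigma_S=\int dt\,p(t)\,\hat U_t\hat\rho_S\hat U_t^\dagger$ with $\hat U_t=e^{-i\hat H_S t}$ established in Eq.~\eqref{sigma_appendix}. Since every $\hat U_t$ is a function of $\hat H_S$ it commutes with $\hat H_S$, hence $\Tr[\hat H_S\,\hat U_t\hat\rho_S\hat U_t^\dagger]=\Tr[\hat U_t^\dagger\hat H_S\hat U_t\,\hat\rho_S]=\Tr[\hat H_S\hat\rho_S]$; integrating against the normalised density $p(t)$ gives $E(\hat\sigma_S)=E(\hat\rho_S)$, and adding $E(\hat\tau_B)$ to both sides yields the claimed equality for the composite state with Hamiltonian $\hat H_S+\hat H_B$.

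For the passive energy, the key observation is that $\hat\sigma_S\otimes\hat\tau_B$ is a convex mixture of \emph{unitary orbits} of $\hat\rho_S\otimes\hat\tau_B$, all taken with respect to the same fixed Hamiltonian $\hat H_S+\hat H_B$: because $\hat U_t$ acts on $S$ only, $(\hat U_t\otimes\mathbb{1}_B)(\hat\rho_S\otimes\hat\tau_B)(\hat U_t^\dagger\otimes\mathbb{1}_B)=(\hat U_t\hat\rho_S\hat U_t^\dagger)\otimes\hat\tau_B$, so $\hat\sigma_S\otimes\hat\tau_B=\int dt\,p(t)\,(\hat U_t\otimes\mathbb{1}_B)(\hat\rho_S\otimes\hat\tau_B)(\hat U_t^\dagger\otimes\mathbb{1}_B)$. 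I would then invoke two elementary properties of $P(\cdot)$: (i) unitary invariance, $P(\hat W\hat\rho\hat W^\dagger)=P(\hat\rho)$, obtained by the substitution $\hat V\mapsto\hat V\hat W$ in the defining minimisation; and (ii) concavity, $P\big(\int d\mu(x)\,\hat\rho_x\big)\ge\int d\mu(x)\,P(\hat\rho_x)$, which holds because $P=\min_{\hat V}\Tr[\hat V^\dagger(\hat H_S+\hat H_B)\hat V\,(\cdot)]$ is a pointwise infimum of functionals affine (indeed linear) in the state, and an infimum of affine functions is concave. Combining (i) and (ii), $P(\hat\sigma_S\otimes\hat\tau_B)\ge\int dt\,p(t)\,P\big((\hat U_t\hat\rho_S\hat U_t^\dagger)\otimes\hat\tau_B\big)=\int dt\,p(t)\,P(\hat\rho_S\otimes\hat\tau_B)=P(\hat\rho_S\otimes\hat\tau_B)$.

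Finally, subtracting the two results gives $R(\hat\sigma_S\otimes\hat\tau_B)=E(\hat\sigma_S\otimes\hat\tau_B)-P(\hat\sigma_S\otimes\hat\tau_B)\le E(\hat\rho_S\otimes\hat\tau_B)-P(\hat\rho_S\otimes\hat\tau_B)=R(\hat\rho_S\otimes\hat\tau_B)$, i.e. $\Delta_L(\hat\rho_S,\hat\rho_W,\hat\tau_B)\ge 0$. I do not anticipate a serious obstacle; the steps deserving most care are (a) checking that the mixture in Eq.~\eqref{sigma_appendix} is over unitaries supported on $S$ alone, so that $\hat\tau_B$ is left untouched and the orbit argument applies with a single fixed Hamiltonian, and (b) the concavity claim (ii), which is precisely the statement that ergotropy cannot increase under a mixture of unitary channels — this is the one nontrivial ingredient and the closest thing to a real bottleneck.
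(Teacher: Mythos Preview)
Your proposal is correct and follows essentially the same route as the paper: the energy equality uses $[\hat U_t,\hat H_S]=0$ exactly as you do, and the passive-energy inequality in the paper is obtained by writing $P(\hat\sigma_S\otimes\hat\tau_B)=\min_{\hat V_{SB}}\int dt\,p(t)\,\Tr[\hat U_t^\dagger\hat V_{SB}^\dagger(\hat H_S+\hat H_B)\hat V_{SB}\hat U_t\,\hat\rho_S\otimes\hat\tau_B]$ and interchanging the minimum with the integral, which is precisely your concavity step (ii) combined with the substitution $\hat V_{SB}\mapsto\hat V_{SB}\hat U_t$ that realises your unitary-invariance step (i). Your more conceptual packaging (``$P$ is concave as an infimum of linear functionals'') is a nice way to phrase it, but the underlying computation is identical.
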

\begin{proof}
Firstly, let us show that the average energy of the control-marginal state is equal to the marginal state, namely
\begin{equation} \label{energy_equality}
\begin{split}
        &E(\hat \sigma_S \otimes \hat \tau_B) = \Tr[(\hat H_S + \hat H_B) \hat \sigma_S \otimes \hat \tau_B] = \int dt \ p(t) \Tr[(\hat H_S + \hat H_B) \hat U_t \hat \rho_S \hat U_t^\dag \otimes \hat \tau_B] =  \\
        &=  \int dt \ p(t) \Tr[(\hat U_t^\dag \hat H_S \hat U_t+ \hat H_B)  \hat \rho_S \otimes \hat \tau_B] = \Tr[(\hat H_S + \hat H_B)  \hat \rho_S \otimes \hat \tau_B] = E(\hat \rho_S \otimes \hat \tau_B).
\end{split}
\end{equation}
Secondly, the passive energy obeys the following inequality:
\begin{equation}
\begin{split}
        &P(\hat \sigma_S \otimes \hat \tau_B) = \min_{\hat V_{SB}} \Tr[\hat V_{SB}^\dag (\hat H_S + \hat H_B) \hat V_{SB} \hat \sigma_S \otimes \hat \tau_B] = \min_{\hat V_{SB}} \int dt \ p(t) \Tr[\hat U_t^\dag \hat V_{SB}^\dag (\hat H_S + \hat H_B) \hat V_{SB} \hat U_t \hat \rho_S \otimes \hat \tau_B] \\
        &\ge \int dt \ p(t) \min_{\hat V_{SB}} \Tr[\hat U_t^\dag \hat V_{SB}^\dag (\hat H_S + \hat H_B) \hat V_{SB} \hat U_t \hat \rho_S \otimes \hat \tau_B] = \int dt \ p(t) P(\hat \rho_S \otimes \hat \tau_B) = P(\hat \rho_S \otimes \hat \tau_B).
\end{split}
\end{equation}
Finally, we obtain
\begin{equation}
    R(\hat \sigma_S \otimes \hat \tau_B) = E(\hat \sigma_S \otimes \hat \tau_B) - P(\hat \sigma_S \otimes \hat \tau_B) \le E(\hat \rho_S \otimes \hat \tau_B) - P(\hat \rho_S \otimes \hat \tau_B) = R(\hat \rho_S \otimes \hat \tau_B).
\end{equation}
\end{proof}

\begin{corollary}
If for a Gibbs state $\hat \tau_B$ holds an equality $R(\hat \rho_S \otimes \hat \tau_B) = F(\hat \rho_S) - F(\hat \tau_S)$ for arbitrary state $\hat \rho_S$, then locked energy is equal to:
\begin{equation}
    \Delta_L(\hat \rho_S, \hat \rho_W, \hat \tau_B) = R(\hat \rho_S \otimes \hat \tau_B) - R(\hat \sigma_S \otimes \hat \tau_B) = T[S(\hat \rho_S) - S(\hat \sigma_S)].
\end{equation}
\end{corollary}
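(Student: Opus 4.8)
The plan is to obtain the claimed closed form directly, as an immediate consequence of the stated hypothesis together with the energy identity already contained in the Proposition. The key observation is that the control-marginal state $\hat \sigma_S$ of Eq.~\eqref{sigma_appendix}, being a convex mixture of unitary images of $\hat \rho_S$, is itself a legitimate density operator on the system carrying the \emph{same} Hamiltonian $\hat H_S$. Hence the assumed saturation $R(\hat X_S \otimes \hat \tau_B) = F(\hat X_S) - F(\hat \tau_S)$, which by hypothesis holds for \emph{every} system state $\hat X_S$, may be invoked both at $\hat X_S = \hat \rho_S$ and at $\hat X_S = \hat \sigma_S$.

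First I would subtract these two instances of the saturation equality. Since $F(\hat \tau_S)$ is a state-independent constant it cancels, leaving $\Delta_L = R(\hat \rho_S \otimes \hat \tau_B) - R(\hat \sigma_S \otimes \hat \tau_B) = F(\hat \rho_S) - F(\hat \sigma_S)$. Next I would expand $F = E - T S$ and use the energy identity $E(\hat \sigma_S \otimes \hat \tau_B) = E(\hat \rho_S \otimes \hat \tau_B)$ established in the Proposition, Eq.~\eqref{energy_equality}; subtracting the common bath term $\Tr[\hat H_B \hat \tau_B]$ gives $\Tr[\hat H_S \hat \sigma_S] = \Tr[\hat H_S \hat \rho_S]$, which holds because $[\hat U_t, \hat H_S] = 0$. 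The average-energy contributions therefore cancel in $F(\hat \rho_S) - F(\hat \sigma_S)$ and one is left with exactly $T$ times the von Neumann entropy difference of $\hat \rho_S$ and $\hat \sigma_S$, i.e.\ Eq.~\eqref{locked_ergotropy_bound}. As a sanity check, concavity of the von Neumann entropy applied to the mixture \eqref{sigma_appendix} yields $S(\hat \sigma_S) \ge S(\hat \rho_S)$, which is consistent with the bound $\Delta_L \ge 0$ from the Proposition.

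There is no genuine obstacle here: the statement is a two-line corollary of the hypothesis and the Proposition. The only points that deserve a moment's care are (i) that the hypothesis is used as a \emph{universal} equality over system states, so that feeding $\hat \sigma_S$ into it is legitimate, and (ii) that what is supplied is the \emph{saturation} of inequality \eqref{ergotropy_vs_free_energy}, not merely the inequality itself --- this is precisely what promotes the bound $\Delta_L \ge 0$ to an equality with a definite value. It is also worth stressing, as the surrounding text does, that such saturation is reached only in the thermodynamic limit (for instance for the bath family \eqref{bath_hamiltonian} as $N \to \infty$), so \eqref{locked_ergotropy_bound} should be read as the large-bath limit of $\Delta_L$ rather than a universal identity for finite baths.
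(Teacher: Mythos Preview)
Your derivation is exactly the argument the paper leaves implicit (the corollary is stated there without proof): invoke the assumed saturation $R(\hat X_S\otimes\hat\tau_B)=F(\hat X_S)-F(\hat\tau_S)$ at $\hat X_S=\hat\rho_S$ and at $\hat X_S=\hat\sigma_S$, subtract, and cancel the energies using the Proposition. So in approach and content you match the paper.

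One genuine slip, though: your sanity check does not confirm the displayed formula, it contradicts it. Doing the algebra explicitly,
\[
F(\hat\rho_S)-F(\hat\sigma_S)
= \big[E(\hat\rho_S)-T S(\hat\rho_S)\big]-\big[E(\hat\sigma_S)-T S(\hat\sigma_S)\big]
= T\big[S(\hat\sigma_S)-S(\hat\rho_S)\big],
\]
since $E(\hat\rho_S)=E(\hat\sigma_S)$. Combined with the concavity observation $S(\hat\sigma_S)\ge S(\hat\rho_S)$ this is indeed $\ge 0$, consistent with $\Delta_L\ge 0$ from the Proposition --- but it is the \emph{opposite} order to Eq.~\eqref{locked_ergotropy_bound} as printed. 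In other words, the formula that actually drops out of your (correct) argument is $\Delta_L = T[S(\hat\sigma_S)-S(\hat\rho_S)]$; the paper's version carries a sign typo, and your concavity remark detects it rather than corroborating it. You should flag this rather than claim consistency.
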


\end{document}